\author{Shuang Zheng}
\author*{Xing Zhang}
\author{Peng Wang}
\author{Wenbo Wang}
\address{Key Laboratory of Universal Wireless Communication, Ministry of Education, School of Information and Communication Engineering, Beijing University of Posts and Telecommunications, Beijing 100876, China}
\begin{document}
\maketitle

\begin{abstract}
Low earth orbit (LEO) satellite communications can provide ubiquitous and reliable services, making it an essential part of the Internet of Everything network. Beam hopping (BH) is an emerging technology for effectively addressing the issue of low resource utilization caused by the non-uniform spatio-temporal distribution of traffic demands. However, how to allocate multi-dimensional resources in a timely and efficient way for the highly dynamic LEO satellite systems remains a challenge. This paper proposes a joint beam scheduling and power optimization beam hopping (JBSPO-BH) algorithm considering the differences in the geographic distribution of sink nodes. The JBSPO-BH algorithm decouples the original problem into two sub-problems. The beam scheduling problem is modelled as a potential game, and the Nash equilibrium (NE) point is obtained as the beam scheduling strategy. Moreover, the penalty function interior point method is applied to optimize the power allocation. Simulation results show that the JBSPO-BH algorithm has low time complexity and fast convergence and achieves better performance both in throughput and fairness. Compared with greedy-based BH, greedy-based BH with the power optimization, round-robin BH, Max-SINR BH and satellite resource allocation algorithm, the throughput of the proposed algorithm is improved by 44.99\%, 20.79\%, 156.06\%, 15.39\% and 8.17\%, respectively.
\keywords{beam hopping; potential game; interior point method; resource allocation}
\end{abstract}

\section{introduction}
\label{s1}

Satellite communication can provide seamless global communications, meeting the communication needs of sparsely populated or complex terrain areas \cite{1,2}. With the rapid development of broadband Internet services, traditional satellite communications are experiencing challenges such as high traffic demand and low throughput. High-throughput satellites (HTS) increase system capacity with critical technologies such as multi-spot beams, frequency reuse, and antenna gain, which has become a hot spot in today's industry \cite{3,4,5}. Most existing HTS systems use geosynchronous orbit (GEO) satellites, which are simple to build and provide stable and uninterrupted services for fixed areas. In recent years, the research on HTS has progressively evolved into low earth orbit (LEO) satellites. The main reasons are that geosynchronous orbital resources are limited, and LEO satellites have the advantages of low latency and low cost \cite{6}.

In most scenarios, the spatio-temporal distribution of traffic demands changes dynamically. The fixed power allocation and beam scheduling approach will make the limited multi-beam HTS onboard resources unable to respond efficiently to the non-uniform traffic demands. Communication resources with high traffic demand are insufficient, and communication resources with low traffic demand are wasted, leading to inefficient resource utilization. The beam hopping (BH) technology adopts time-slicing technology. That is, $K$ beams are scheduled at each slot  on the basis of traffic demands and $N (N\ge K)$ beam positions can be covered by a LEO satellite simultaneously \cite{7}. In \cite{8}, it is verified that the BH satellite system has better flexibility and can better cope with the non-uniform user distribution and traffic demands, thereby improving the utilization of resources. The proposal of BH technology has attracted extensive attention from researchers. It is considered to be the key technology for developing HTS to very HTS.

In order to better solve the problem of low resource utilization caused by the spatio-temporal distribution of traffic demands, many scholars have proposed a series of BH satellite resource allocation algorithms. Most existing research focuses on the forward link in GEO satellite systems because the forward link is the main direction of service transmission, and resource optimization decision is made totally by the satellite. The studies in \cite{9,10} have solved the problem of resource optimization in BH satellite systems based on global optimization characteristics and applicability of the genetic algorithm (GA). The algorithm in \cite{10} can dynamically adjust the time slot allocation to satisfy the non-uniform traffic demands of each beam position under the influence of time-varying rain attenuation and effectively improve the system's performance. In \cite{11,12}, authors have considered the influence of co-channel interference (CCI) on resource allocation and proposed two iterative algorithms based on maximizing signal to interference plus noise ratio (SINR) and minimizing CCI to improve the total capacity. In \cite{13},  Wang et al. have considered a uniform clustering scheme. And on the basis, to achieve efficient utilization of resources, a joint power and time slots optimization algorithm has been proposed to achieve on-demand resource allocation. In recent years, some scholars have introduced deep reinforcement learning (DRL) methods to optimize resource allocation in BH satellite systems \cite{14,15}. Xu et al. have adopted a novel multi-action selection method based on double-loop learning to optimize resource allocation in the BH satellite systems \cite{14}. The method can ensure the fairness of each beam position, minimize transmission delay of real-time service and maximize throughput of non-real-time service. In \cite{15}, a dynamic beam position illumination and bandwidth allocation scheme based on DRL has been proposed, which flexibly utilizes the three-dimensional resource of time, space and frequency to achieve throughput maximization and fairness considering the time delay between beam positions. 

Due to the high-speed movement of LEO satellites, the dynamic changes in channel conditions and traffic demands make the research on BH resources in GEO satellite systems unable to be directly applied to LEO satellite systems. Consequently, some scholars are exploring resource allocation schemes suitable for BH LEO satellite systems \cite{16,17}. In \cite{16}, Liu et al. have modelled the coverage area as a rectangular block for the time-varying position of LEO satellites and used an iterative algorithm to maximize system capacity. A greedy algorithm has been adopted to improve throughput based on traffic demand in the beam position \cite{17}. Both of the above studies have not discussed the impact of CCI on the LEO satellite systems. Besides, the power of satellite is divided on average, so further optimization is still possible.

Although existing resource allocation techniques for BH LEO satellite systems can bring benefits to system performance, there is still room for improvement. In general, the scenario in discussion is only available for the user distributed in the center of the beam position. In this work, we consider difference in the geographic distribution of sink nodes within the same beam position. In addition, in the BH LEO satellite systems, the CCI cannot be ignored, and the existing work fails to fully consider the influence of the CCI on the system performance. Therefore, we propose an algorithm to improve system performance by using limited onboard resources, which can effectively avoid interference and make timely resource allocation decisions. Last but not least, the power of LEO satellites is limited, but most of the existing work adopts the way that the power is allocated equally among the beams. Thus, the resources are not efficiently utilized. In this work, besides the flexible beam scheduling, the power allocation is optimized. The main work of this paper is as follows:
\begin{itemize}
     \item Firstly, the resource allocation strategy for BH LEO satellite systems is designed. The problem with the joint beam scheduling and power allocation is formulated to minimize the total second-order difference (SOD) cost of traffic offered and traffic demand in each beam position.
     \item Then, the joint beam scheduling and power optimization beam hopping (JBSPO-BH) algorithm is proposed, and the original problem is divided into two sub-problems. One is the problem of beam scheduling, which is solved by the potential game theory. The other is the power optimization problem and the penalty function interior point method is applied to realize power allocation.
     \item Finally, based on simulation results, the JBSPO-BH algorithm can improve the performance of system throughput and fairness. Moreover, it can also converge quickly, which is more likely to adapt to the time-varying characteristics of LEO satellites.
\end{itemize} 

The rest of this paper is organized as follows. Section~\ref{s2} introduces the model and problem formulation of BH LEO satellite systems. Section~\ref{s3} introduces the potential game-based beam scheduling algorithm and JBSPO-BH algorithm. Section~\ref{s4} presents the simulation results. Finally, Section~\ref{s5} concludes the paper.
\section{system model}
\label{s2}
This paper primarily studies the forward link in DVB-S2X \cite{18}, as seen in Figure~\ref{fig1}. Meanwhile, a BH coverage scheme that combines wide and spot beams is adopted \cite{19}. A wide beam, also known as a signalling beam, is responsible for timely signalling transmission and collection of user access and channel state information in the whole coverage area. A spot beam, also known as a service beam, realizes the on-demand high-speed broadband service in the form of BH. The following beam refers to service beam. One LEO satellite can provide $K$ beams to cover the $N$ beam positions in a time-division multiplexing (TDM) manner. In particular, $K$ beams are scheduled, also known as that $K$ beam positions are illuminated simultaneously ($N\ge K$). And the centre of the beam is the centre of the illuminated beam position. There are several sink nodes distributed in each beam position, which collect information from nearby users. We assume that only one sink node can get transmission in a beam during a slot. The packet is transmitted from the gateway to the LEO satellite and then sent to the sink node of the corresponding beam position according to the BH schedule generated by the BH controller. The notations and corresponding descriptions in this section are summarized in Table~\ref{tab1}.

\begin{table}[htbp]
\centering
\caption{Mainly notations.}
\begin{tabular}{llll}\hline

Notation & Description \\\hline
$N$ & The number of beam positions \\
$K$ & The number of beams \\
$P_{n}^t$ &Power allocated to the $n$th beam position \\
&at time $t$ \\
$H_{n,m}^t$ &At time $t$ gain of beam covering the $m$th\\
&beam position to the $n$th beam position \\
$G_T({\theta})$&Gain of the transmit antenna with ${\theta}$\\
&away from the spindle\\
$G_R({\varphi})$&Gain of the receive antenna with ${\varphi}$\\
&away from the spindle\\
$PL_{n}^t$ &Path loss of the $n$th beam position at\\
&time $t$\\
$P_N$& Noise power\\
$SINR_n^t$&SINR of the $n$th beam position at time $t$\\
$B$&Total bandwidth\\
$R_n^t$& Traffic offered of the sink node of the $n$th\\
&beam position at time $t$\\
$\hat{R}_{n}^{t}$ &Traffic demand of the $n$th beam position\\
&at time $t$\\
$\alpha_n^t$&Beam scheduling variable of the $n$th\\
& beam position at time $t$\\
$T_b$ & BH slot\\
$T_p$ & BH cycle\\
$T_s$ & Cycle of updating the location of\\
& subsatellite point\\
$\theta_{3dB}$ &Half-power angle of transmit antenna \\
$\varphi_{3dB}$ &Half-power angle of receive antenna \\\hline
\end{tabular}
\label{tab1}
\end{table}
Figure~\ref{fig2} depicts the time slot scheduling in the BH LEO satellite systems. Let $T_b$ represent the smallest time scale in the system and $T_p$ represent BH cycle. During $T_b$, at most $K$ beams are scheduled while power allocation is achieved.  The BH controller plans the BH schedule and updates the traffic demand of each beam position for packet arrival every $T_p$. The cycle of updating the subsatellite point's location $T_s$ is introduced, since the position of the LEO satellite changes continuously over time.
\begin{figure*}[htbp]
    \centering
    \includegraphics[width=0.8\textwidth]{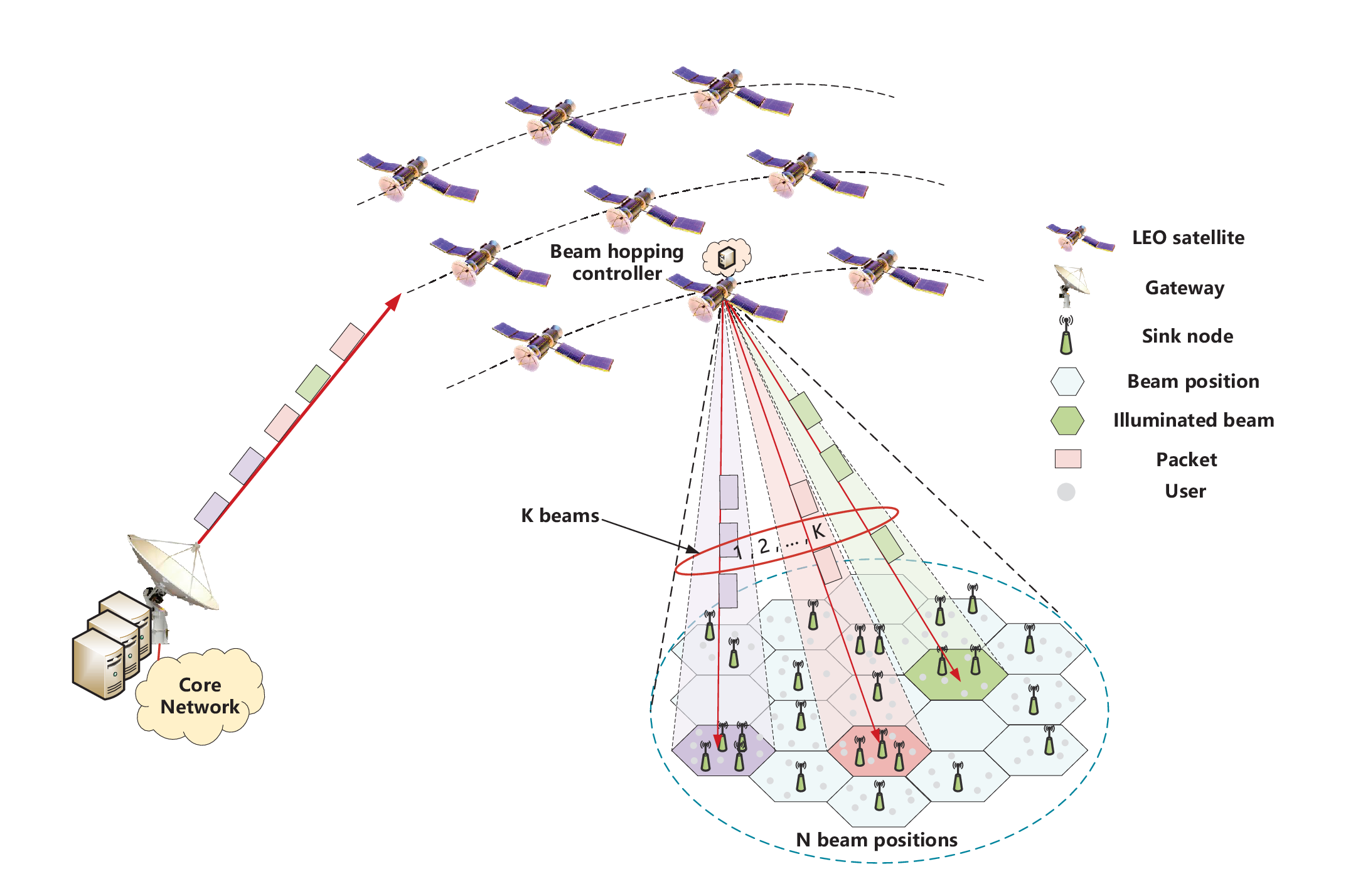}
    \caption{BH LEO satellite scenario.}
	\label{fig1}
\end{figure*}
\begin{figure}[htbp]
    \centering
    \includegraphics[width=0.5\textwidth]{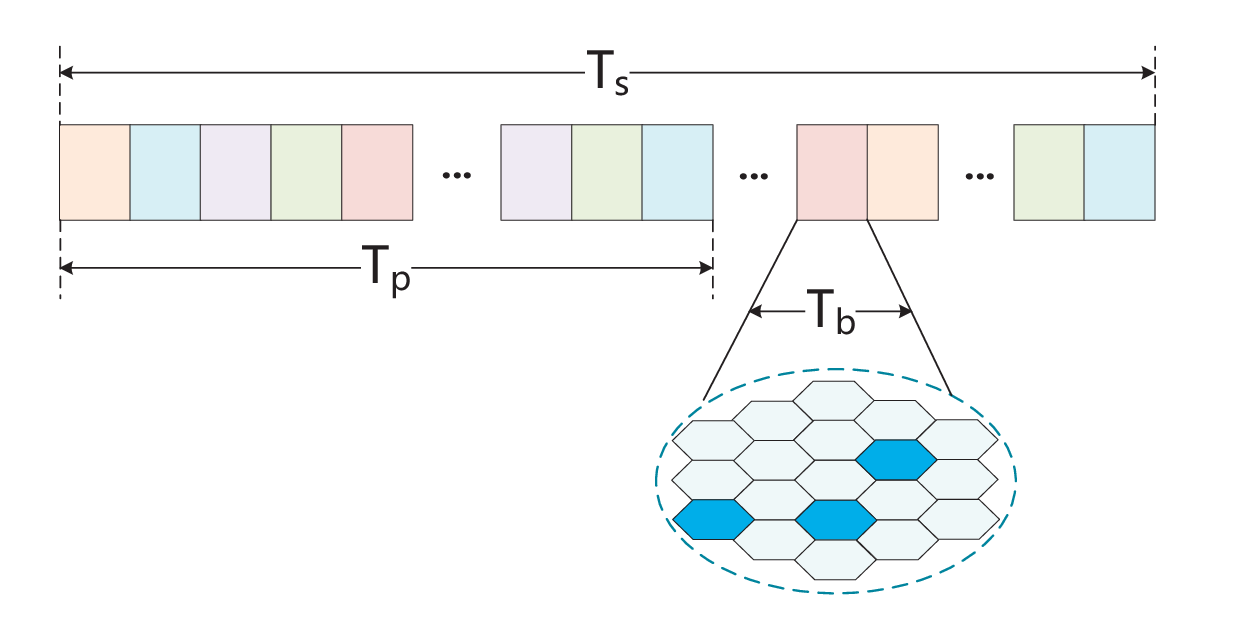}
    \caption{BH LEO satellite time slots scheduling.}
	\label{fig2}
\end{figure}
\subsection{communication model}
The set of sink nodes in the $n$th beam position is ${\mathbf{O}_{n}}=\{{{o}_{n,1}},{{o}_{n,2}},...,{{o}_{{n,{n}_{\max }}}}\}$, and $n_{max}$ is the number of sink nodes in the $n$th beam position. In this paper, it is assumed that only one sink node in a beam position can get transmission service during $T_b$. At time $t$, the signal received by the served sink node of the $n$th beam position is:
\begin{equation}
y_{n}^{t}=\sqrt{P_{n}^{t}H_{n,n}^{t}}{{x}_{n}^t}+\sum\limits_{m\ne n}{\sqrt{P_{m}^{t}H_{n,m}^{t}}{{x}_{m}^t}+{{n}_{0}}},
\label{1}
\end{equation}
where $P_{n}^t$ is the transmit power of the LEO satellite allocated to the $n$th beam position, and $x_{n}^t$ is the transmit symbol of the $n$th beam position at time $t$. In addition, $n_0$ is the additional white Gaussian noise with power of $P_N$. $H_{n,m}^t$ denotes channel gain of the beam covering the $m$th beam position to the $n$th beam position, specifically:
\begin{equation}
H_{n,m}^{t}={{G}_{T}}(\theta _{n,m}^{t}){{G}_{R}}(\varphi _{n}^{t})PL_{n}^{t}.
\label{2}
\end{equation}

In Eq.~\eqref{2}, $\theta_{n,m}^t$ is the angular position of the selected sink node of $n$th beam position from the beam covering the $m$th beam position at time $t$. Similarly, $\varphi_{n}^t$ is the angle between the signal transmission's direction and the sink node's spindle of the $n$th beam position at time $t$. Thus, $G_T({\theta_{n,m}^t})$ and $G_R({\varphi_{n}^t})$ represent the gain of the transmit antenna with $\theta_{n,m}^t$ away from the spindle and the receive antenna with $\varphi_{n}^t$ away from the spindle, respectively. $PL_{n}^t$ is the path loss of the $n$th beam position at time $t$, including free space loss and other loss ${PL}_o$. The following formula calculates the gains of transmit and receive antennas \cite{20,21}:
\begin{equation}
G(\theta )={{G}_{\max }}{\Big(\frac{{{J}_{1}}(u(\theta ))}{2u(\theta )}+36\frac{{{J}_{3}}(u(\theta ))}{{{(u(\theta ))}^{3}}}\Big)^{2}},
\label{3}
\end{equation}
\begin{equation}
u(\theta )=\frac{2.07123\sin \theta }{\sin {{\theta }_{3dB}}},
\label{4}
\end{equation}
where $G_{max}$ is the maximum gain of the antenna, and $J_1$ and $J_3$ represent the first-order and third-order Bessel functions, respectively. Besides, $\theta$ is the off-axis angle, and ${\theta}_{3dB}$ is the half-power angle of the antenna. Therefore, the SINR of the sink node of the $n$th beam position at time $t$ can be expressed as:
\begin{equation}
SINR_{n}^{t}=\frac{\alpha _{n}^{t}P_{n}^{t}H_{n,n}^{t}}{\sum\limits_{m\ne n}{\alpha _{m}^{t}P_{m}^{t}H_{n,m}^{t}+{{P}_{N}}}},
\label{5}
\end{equation}
where $\alpha_n^t$ is a beam scheduling variable indicating whether the $n$th beam position is illuminated at time $t$.

Therefore, the corresponding traffic offered of the $n$th beam position during $T_b$  is given by the following Shannon's formula:
\begin{equation}
R_{n}^{t}=B{{T}_{b}}{{\log }_{2}}(1+SINR_{n}^{t}),
\label{6}
\end{equation}
where $B$ is total bandwidth.
\subsection{problem formulation}
Considering that in BH LEO satellite systems, the resource allocation should not only improve the system throughput, but also ensure the fairness among beam positions. Therefore, the metric of SOD cost of traffic offered and traffic demand is adopted. To obtain the minimum SOD cost of the whole system, the optimization problem can be formalized as \cite{22}:
\begin{equation}
\min \sum\limits_{n}{{{(\sum\limits_{t}{R_{n}^{t}-{{{\hat{R}}}_{n}}})}^{2}}},
\label{7}
\end{equation}
where ${\hat{R}}_{n}$ is traffic demand of the $n$th beam position. Further, the above problem can be transformed into optimizing SOD cost in each BH slot.
In problem~\eqref{8}, $\hat{R}_{o_{n,i}}^{t}$ and $\hat{R}_{n}^{t}$ represent the traffic demand of sink node $o_{n,i}$ and the $n$th beam position at time $t$, respectively. Let $o_{n,i^{*}}$ represent the sink node for packet transmission in the $n$th beam position at time $t$. $C1$ means that the number of illuminated beam positions in the same slot cannot exceed the number of beams provided by the LEO satellite; $C2$ indicates that the upper limit of the sum of power allocated to each beam position is the maximum transmit power of the satellite; $C3$ indicates that the beam scheduling variable $\alpha _{n}^{t}$ is a binary integer, which means whether a beam is scheduled to the $n$th beam position at time $t$; $C4$ is the relationship between $\hat{R}_{o_{n,i}}^{t}$ and $\hat{R}_{n}^{t}$; $C5$ ensures that resources are not wasted. 
\begin{equation}
\begin{aligned}
&\underset{\alpha _{n}^{t},P_{n}^{t}}{\mathop{\min }} \sum\limits_{n=1}^{N}{(R_{n}^{t}-}\hat{R}_{n}^{t}{{)}^{2}}, \\ 
&\begin{aligned}
s.t.\  &C1  :  \sum\limits_{n=1}^{N}{\alpha _{n}^{t}}\le K, \\ 
&C2  :  \alpha _{n}^{t}\in \{0,1\},\\
&C3  :  \sum\limits_{n=1}^{N}{P_{n}^{t}}\le {{P}_{\max }}, \\ 
&C4  :  \hat{R}_{n}^{t}=\sum\limits_{{{o}_{n,i}}\in {{\mathbf{S}}_{n}}}{\hat{R}_{{{o}_{n,i}}}^{t}},\\
&C5  :  {R}_{n}^{t}\le\hat{R}_{{{o}_{n,i^*}}}^{t}. 
\end{aligned}
\end{aligned}
\label{8}
\end{equation}

Since the above problem is a mixed integer programming problem, it is difficult to be solved by the convex optimization method, so the problem is divided into two sub-problems.

\section{joint beam scheduling and power optimization beam hopping}
\label{s3}

The above optimization problem is a mixed integer programming problem which is challenging to solve because the beam scheduling variables are binary integer variables and the power allocation variables are continuous variables. As a consequence, the proposed algorithm decouples the optimization problem into two sub-problems. For the beam scheduling problem, we prove that it is a potential game problem and use game theory to solve it. Based on the beam scheduling scheme, the penalty function interior point method is further used to optimize the power allocation.
\subsection{beam scheduling}
Considering that the transmit power of the satellite is allocated to each beam averagely, where ${{P}_{ave}}=\frac{{{P}_{\max }}}{K}$. problem(\ref{8}) can be rewritten as
\begin{equation}
\begin{aligned}
&\underset{\alpha _{n}^{t}}{\mathop{\min }} \sum\limits_{n=1}^{N}{{{\Big(B{{T}_{b}}{{\log }_{2}}(1+SINR_{n}^{t})-\hat{R}_{n}^{t}\Big)}^{2}}},\\ 
&\begin{aligned}
s.t.\ &C1 : \sum\limits_{n=1}^{N}{\alpha _{n}^{t}}\le K,\\
&C2 : \alpha _{n}^{t}\in \{0,1\},\\
&C3 :SINR_{n}^{t}=\frac{\alpha _{n}^{t}{{P}_{ave}}H_{n,n}^{t}}{\sum\limits_{m\ne n}{\alpha _{m}^{t}{{P}_{ave}}H_{n,m}^{t}+{{P}_{N}}}}.
\end{aligned}
\end{aligned}
\label{9}
\end{equation}

Referring to \cite{23,24}, the game is denoted by $\mathbf{G}=\{\mathbf{K},{{({{a}_{k}})}_{\{k\in \mathbf{K}\}}},{{({{u}_{k}}({{a}_{k}},{{a}_{-k}}))}_{\{k\in\mathbf{K}\}}}\}$. Let $\mathbf{K}=\{1,2,...,K\}$ represent the set of all game players. Define ${{a}_{k}}\in \mathbf{B}\subseteq \mathbf{N}=\{0,1,...,N\}$ as the beam scheduling strategy of the $k$th beam, where $\mathbf{B}$ is the set of beam positions with non-zero traffic demand. Besides, ${{a}_{-k}}$ is defined as the beam scheduling strategy of other beams except the $k$th beam, namely, ${{a}_{-k}}={{({{a}_{{{k}^{'}}}})}_{{{k}^{'}}\in \mathbf{K}}}\backslash \{{{a}_{k}}\}$. In the game, each player expects to minimize its utility function ${{{{u}_{k}}({{a}_{k}},{{a}_{-k}})}}$ \cite{25}.
\begin{equation}
\begin{aligned}
  & \underset{{{a}_{k}}}{\mathop{\min }}\,{{u}_{k}}({{a}_{k}},{{a}_{-k}}),\\ 
&\begin{aligned}
s.t. \  &C1  :  {{a}_{k}}\in \mathbf{B}, \\ 
&C2  :  {{a}_{k}}\notin  {{a}_{-k}},     \\ 
\end{aligned}
\end{aligned}
\label{10}
\end{equation}
where ${{{{u}_{k}}({{a}_{k}},{{a}_{-k}})}}$ is defined in Eq.~\eqref{12}.
\begin{strip}
\begin{equation}
\begin{aligned}
  & {{u}_{k}}({{a}_{k}},{{a}_{-k}})=\sum\limits_{k}{\Bigg({{\bigg(B{{T}_{b}}{{\log }_{2}}\Big(1+\frac{{{P}_{ave}}H_{{{a}_{k}},{{a}_{k}}}^{t}}{\sum\limits_{l\ne k}{{{P}_{ave}}H_{{{a}_{k}},{{a}_{l}}}^{t}+{{P}_{N}}}}\Big)\bigg)}^{2}}-2B{{T}_{b}}\hat{R}_{{{a}_{k}}}^{t}{{\log }_{2}}\Big(1+\frac{{{P}_{ave}}H_{{{a}_{k}},{{a}_{k}}}^{t}}{\sum\limits_{l\ne k}{{{P}_{ave}}H_{{{a}_{k}},{{a}_{l}}}^{t}+{{P}_{N}}}}\Big)\Bigg)}. \\ 
\end{aligned}
\label{12}
\end{equation}
\end{strip}
\begin{theorem}\label{Theorem1}Game $\mathbf{G}$ is an exact potential game under potential function F.
\label{theo1}
\end{theorem}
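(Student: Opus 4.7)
The plan is to exhibit a candidate potential function $F$ and verify directly that $u_k(a_k,a_{-k})-u_k(a_k',a_{-k})=F(a_k,a_{-k})-F(a_k',a_{-k})$ holds for every player $k$ and every unilateral deviation $a_k\to a_k'$. The natural candidate, suggested by the structure of problem~\eqref{9} and the utility in~\eqref{12}, is simply the global SOD objective itself,
\begin{equation*}
F(a_1,\ldots,a_K)=\sum_{n=1}^{N}\bigl(BT_b\log_2(1+SINR_n^t)-\hat{R}_n^t\bigr)^2.
\end{equation*}
Because all players are taken to share this same team-style utility up to a constant, the game becomes an exact potential game essentially by construction; the work is in pinning down that constant.

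\textbf{Key steps.} First, I would expand the square inside $F$ into three pieces, namely $(BT_b\log_2(1+SINR_n^t))^2$, $-2BT_b\hat{R}_n^t\log_2(1+SINR_n^t)$, and $(\hat{R}_n^t)^2$. Splitting the outer sum over the $N$ beam positions into the illuminated set $\{a_1,\ldots,a_K\}$ and the remaining non-illuminated positions (for which $R_n^t=0$ so only the $(\hat{R}_n^t)^2$ piece survives) lets me identify the first two pieces with the summand of $u_k$ in~\eqref{12} term-by-term, since on non-illuminated positions both of those pieces vanish. Collecting the leftover quadratic demand contributions over all $n$, illuminated or not, yields $\sum_{n=1}^{N}(\hat{R}_n^t)^2$, which is strategy-independent. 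Consequently $F-u_k$ is a constant across all strategy profiles, so $\Delta F=\Delta u_k$ under any unilateral deviation and $F$ is an exact potential.

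\textbf{Expected obstacle.} The main subtlety is the bookkeeping under a deviation $a_k\to a_k'$: the illuminated set changes, every other player's SINR is perturbed through the interference denominator in~\eqref{5}, and the roles of $a_k$ and $a_k'$ swap between ``illuminated'' and ``non-illuminated.'' I need to verify that the sum appearing in~\eqref{12} is taken over the \emph{current} illuminated set (not a fixed index set) so that the reshuffled demand squares $\hat{R}_{a_k}^2$ and $\hat{R}_{a_k'}^2$ combine correctly with the non-illuminated contributions in $F$ to preserve the constant difference $F-u_k$. Once this accounting is confirmed, the theorem is a direct consequence of the Monderer--Shapley definition of an exact potential game, and no extra monotonicity or convexity properties of the log term are required.
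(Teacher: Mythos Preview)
Your proposal is correct and follows essentially the same route as the paper: take $F$ to be the global SOD objective~\eqref{13}, expand the square, observe that the two strategy-dependent pieces coincide with the summand of $u_k$ in~\eqref{12} while the remaining $\sum_n(\hat{R}_n^t)^2$ is a strategy-independent constant, and conclude $F=u_k+C$ so that $\Delta F=\Delta u_k$. Your explicit attention to the bookkeeping under a deviation (the illuminated set reshuffles and the interference denominators change for all players) is a point the paper leaves implicit, but the underlying argument is the same.
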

\begin{proof}Please see Appendix A and B.
\end{proof}

\begin{algorithm}[htbp]
\caption{Potential game-based beam scheduling algorithm.}
\begin{algorithmic}[1]
\REQUIRE{${{\mathbf{B}}^{t}},{{\mathbf{H}}^{t}},{{\{\hat{R}_{n}^{t}\}}_{n\in {\mathbf{B}^{t}}}}$}
\ENSURE{${{\mathbf{a}}^{t}}^{*}$}
\STATE Initialize $iteration=1$.
\STATE A random beam scheduling scheme $\mathbf{a}$ is adopted after assigning an average power allocation.
\WHILE{$iteration<{iteration}_{max}$} 
    \FOR {beam $k=1,2,...,K$}
    	\STATE Calculate utility function ${{u}_{k}}({{a}_{k}},{{a}_{-k}})$.
     \STATE Illuminate beam position based on best response principle
      \begin{equation}
      a_{k}^{t}=\underset{{{a}_{k}}\in {\mathbf{B}}, {{a}_{k}}\notin  {{a}_{-k}}}{\mathop{\arg \min }}\,{{u}_{k}}({{a}_{k}},{{a}_{-k}}).
     \nonumber 
	\end{equation}
    \ENDFOR
    \STATE $iteration\leftarrow iteration+1$.
    \STATE Until ${({\mathbf{a}})}^{iteration}={({\mathbf{a}})}^{iteration-1}$.
\ENDWHILE
\STATE${{\mathbf{a}}^{t}}^{*}\leftarrow{({\mathbf{a}})}^{iteration}$.
\end{algorithmic}
\label{algorithm1}
\end{algorithm}
According to Theorem \ref{theo1}, the game is an exact potential game. According to \cite{26}, there must be a pure strategy Nash equilibrium (NE) point, which is the global or local optimal solution of the potential function. 

In summary, the potential game-based beam scheduling algorithm is described in Algorithm \ref{algorithm1}. At time $t$, the beam position set $\mathbf{B}^t$ with non-zero traffic demand, channel gain set $\mathbf{H}^t$, and traffic demand set ${{\{\hat{R}_{n}^{t}\}}_{n\in {\mathbf{B}^{t}}}}$ are generated first. The initial beam scheduling set $\mathbf{a}$ is obtained randomly while the power is allocated averagely. Then, in each iteration, each player optimizes its utility function based on the best response principle, until NE point is reached.
\subsection{power optimization}
Furthermore, the power allocation is carried out on the basis of the beam scheduling scheme. In this case the original problem~\eqref{8} can be rewritten as problem~\eqref{14}. It is obvious that the above problem only contains continuous variables. Due to the coupling of variables in the optimization function, the problem~\eqref{14} remains a non-convex optimization problem despite being easier to handle than the original problem~\eqref{8}. Although there exist nonlinear constraints, the penalty function interior point method can be used to solve it.
\begin{equation}
\begin{aligned}
&\underset{P_{{{a}_{k}}}^{t}}{\mathop{\min }}\sum\limits_{k=1}^{K}{{{(R_{{{a}_{k}}}^{t}-\hat{R}_{{{a}_{k}}}^{t})}^{2}}} ,\\ 
&\begin{aligned}
s.t.\ &C1: \sum\limits_{k}{P_{{{a}_{k}}}^{t}\le {{P}_{\max }}},\\ 
&C2:P_{{{a}_{k}}}^{t}\ge 0, \\ 
&C3:R_{a_k}^t\le\hat{R}_{{{a}_{k}}}^{t},\\
&C4:R_{{{a}_{k}}}^{t}=BT_b{\log}_2\Big(1+\frac{P_{{{a}_{k}}}^{t}H_{{{a}_{k}},{{a}_{k}}}^{t}}{\sum\limits_{l\ne k}{P_{{{a}_{l}}}^{t}H_{{{a}_{k}},{{a}_{l}}}^{t}+{{P}_{N}}}}\Big).
\end{aligned}
\end{aligned}
\label{14}
\end{equation}

In particular, there is an approximation problem for $\mu$ in each iteration \cite{27,28}.
\begin{equation}
\begin{aligned}
 &\underset{{\mathbf{P}^{t}},\mathbf{s}}{\mathop{\min }}\,{{f}_{\mu }}=g({\mathbf{P}^{t}})-\mu (\sum\limits_{i}{\ln ({{s}_{i}})}), \\ 
&\begin{aligned}
s.t.\ &C1:\sum\limits_{k}{P_{{{a}_{k}}}^{t}}-{{P}_{\max }}+{{s}_{1}}=0, \\ 
&C2:-P_{{{a}_{k}}}^{t}+{{s}_{k+1}}=0, \\ 
&C3:R_{a_k}^t-\hat{R}_{{{a}_{k}}}^{t}+s_{k+1+K}=0.
\end{aligned}
\end{aligned}
\label{15}
\end{equation}

In Eq.~\eqref{15}, ${{\mathbf{P}}^{t}}$ is the scheme of power optimization and $g({\mathbf{P}}^{t})$ is the objective function in problem~\eqref{14} after substituting the equality constraint $C4$. In addition, the size of the slack variable set $\mathbf{s}$ is equal to the number of inequality constraints and each variable is restricted to be positive to keep the value within the feasible domain. As the number of iterations increases, $\mu$ gradually decreases to 0, and the minimum value of $f_{\mu}$ should be close to the minimum value of $g({\mathbf{P}}^{t})$ correspondingly. The increasing logarithmic term is the penalty function. In this case, the approximation problem consists of a series of equality constraint problems, which is simpler to solve than the original problem.

The solution of power optimization problem is as follows. The average power allocation strategy is taken as the initial value point. Then, the Newton step is used to solve the approximation problem by linear approximation. If the Newton step fails, the trust region method is applied and the conjugate gradient step is used. In each iteration, the problem~\eqref{15} is optimized and the value of $\mu$ is updated. Until the accuracy requirement is met, the iteration stops. Finally, the global or local optimal value is obtained, that is, the power optimization scheme ${\mathbf{P}^{t^*}}$ is obtained \cite{29}.
\subsection{Joint Beam Scheduling and Power Optimization}
\begin{algorithm}[htbp]
\caption{Joint beam scheduling and power optimization beam hopping algorithm.}
\begin{algorithmic}[1]
\REQUIRE{$N,K,{{P}_{\max }},{{t}_{\max }},$ latitude and longitude coordinates of every beam position's centre and sink nodes}
\ENSURE{${{\{\hat{R}_{n}^{{{t}_{\max }}}\}}_{n\in \{1,2,...,N\}}}$}
\STATE Initialize the location of subsatellite point.
\STATE Initialize ${{t}_{1}}=0$ and ${{\{\hat{R}_{n}^{{{t}_{1}}}\}}_{n\in \{1,2,...,N\}}}$.
\WHILE {$t_1<t_{max}$}
	\STATE Update $\hat{R}_{o_{n,i}}^{t_1}$ with packet arrival.
	\IF {$mod(t_1,T_s)==0$}
		\STATE Update the location of subsatellite point.
     \ENDIF
     \STATE $t_2=0$.
	\WHILE{$t_2<T_p$} 
		\STATE Record preselected set $\mathbf{B}^{t_1+t_2}$.
		\STATE Select sink nodes based on greedy strategy\\
	     \begin{equation*}
          o_n^*=\underset{{{o}_{n,i}}\in \mathbf{O}_n}{\mathop{\arg \max }}\,\hat{R}_{o_{n,i}}^{t_1+t_2}.
		\end{equation*}
		\STATE Calculate ${\mathbf{H}}^{t_1+t_2}$.
		\IF{ $length(\mathbf{B}^{t_1+t_2})<K$}
			\STATE Illuminate each beam position in $\mathbf{B}^{t_1+t_2}$.
		\ELSE
      		\STATE Adopt Algorithm \ref{algorithm1} and get ${\mathbf{a}^{({{t}_{1}}+{{t}_{2}})}}^{*}$.
			\STATE Optimize power allocation with penalty function interior point method and get ${\mathbf{P}^{({{t}_{1}}+{{t}_{2}})}}^{*}$.
		\ENDIF
		\STATE Calculate ${{\{R_{n}^{{{t}_{1}}+{{t}_{2}}}\}}_{n\in \{{\mathbf{a}^{({{t}_{1}}+{{t}_{2}})}}^{*}\}}}$.
		\STATE $t_2 \leftarrow t_2+T_b$.
		\STATE Update $\hat{R}_{n}^{{{t}_{1}}+{{t}_{2}}}$.
	    
	\ENDWHILE
	\STATE $t_1\leftarrow t_1+T_p$.
\ENDWHILE
\STATE Get ${{\{\hat{R}_{n}^{{{t}_{\max }}}\}}_{n\in \{1,2,...,N\}}}$.
\end{algorithmic}
\label{algorithm2}
\end{algorithm}
In the scenario described in this paper, multiple sink nodes are distributed in each beam position, and their channel conditions vary. The greedy strategy is used to select sink nodes in order to make more effective use of the limited resources of the satellite. That is, the sink node with the highest traffic demand in the present slot is selected for the corresponding packet transmission.

The JBSPO-BH algorithm is shown in Algorithm \ref{algorithm2}. Considering the high-speed motion characteristics of the LEO satellite, the location of the subsatellite point are updated every $T_s$. Firstly, the traffic demand of each beam position is obtained, and the beam positions with non-zero traffic demand consists of a preselected set. The policy search space can be effectively reduced by recording the preselected set. The sink node for data transmission in the present slot is selected in each beam position, and then the channel gain matrix is calculated. If the number of beam positions in the preselected set is less than the number of beams a single satellite can provide, then all of the beam positions will be illuminated. Otherwise, firstly, the beam scheduling scheme is obtained based on Algorithm \ref{algorithm1}. Then the corresponding power optimization is solved based on the penalty function interior point method. Finally, the traffic demand of each beam position is updated for the next slot. Moreover, BH controller makes a decision and the service arrival is counted every BH cycle in the BH LEO satellite systems.

\section{simulation}
\label{s4}
\begin{table}[htbp]
\centering
\caption{Simulation parameters \cite{30,31}.}
\begin{tabular}{cccc}\hline
Parameter & Value \\\hline
$B$ & 200 MHz \\
$T_b$ & 0.5 ms \\
$T_p$ &20 ms\\
$T_s$ & 200 ms\\
$T$ &20 s\\
${G}_{T\max}$ &36.2 dBi\\
${G}_{R\max}$ &20 dBi\\
${PL}_O$ & 7 dB\\
$P_N$ & $7.96\times {{10}^{-13}}$ W \\
$P_{max}$ & $250$ W\\
$M$& 10 kbits \\\hline
\end{tabular}
\label{tab2}
\end{table}
In this section, numerical results show the effectiveness of JBSPO-BH algorithm in the BH LEO satellite systems. The orbital altitude of the LEO satellite is 508 km, and it can provide 8 beams to cover 61 beam positions in a TDM way. The sink node is based on the Poisson point process distribution with the density ${\lambda }_{s}=1.5\times {{10}^{-8}}$. For simplicity, packets are assumed to be in a fixed size $M$ and arrive in a Poisson model with $\lambda$. Then, we adopt the Ka-band with the downlink frequency of 20 GHz. In addition, the basic parameters of GA-BH are: the number of generation $G$ is 200, the number of population $P$ is 100, the mutation probability $P_{mut}$ is 0.2, and the crossover probability $P_{cro}$ is 0.8. More specific simulation parameters are shown in Table \ref{tab2} \cite{30,31}.

In this paper, we compare the proposed algorithm with the following 5 different BH resource allocation algorithms and satellite resource allocation algorithm in \cite{20}.

\emph{1) The Greedy-Based Beam Hopping (G-BH):} The G-BH algorithm selects the top $K$ beam positions to illuminate based on the highest traffic demand in each BH slot. In addition, the power is allocated to the illuminated beam positions on average.

\emph{2) The Greedy-Based Beam Hopping with Power Optimization (G-BHPO):} The beam scheduling scheme is the same as the G-BH algorithm, and the power allocation strategy is optimized using the penalty function interior point method described in this paper.

\emph{3) The Round-Robin Beam Hopping (RR-BH):} The RR-BH algorithm is that all beam positions occupy $K$ beam resources in turn, regardless of the traffic demand of different beam positions. And the transmit power of each beam is equal.

\emph{4) The Max-SINR Beam Hopping (Max-SINR-BH):} The Max-SINR algorithm is focused on maximizing the SINR of each BH slot to satisfy the traffic demand, and the power is equally allocated to the beam.

\emph{5) The Genetic Algorithm Beam Hopping (GA-BH):} The beam scheduling scheme is solved by GA, and the power optimization strategy based on penalty function interior point method described in this paper is adopted.
\subsection{Convergence}
\begin{figure}[htbp]
    \centering
    \includegraphics[width=0.5\textwidth]{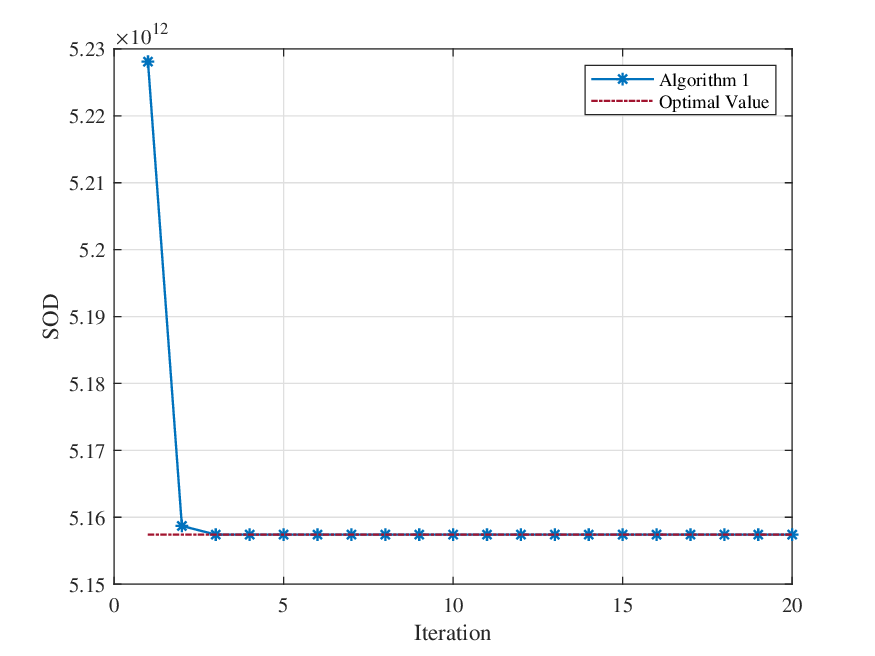}
    \caption{SOD decreasing by iterations.}
	\label{fig3}
\end{figure}

Figure~\ref{fig3} depicts that as the number of iterations increases, the SOD cost gradually decreases. It means the system performance is improved with the potential game-based beam scheduling algorithm shown in Algorithm \ref{algorithm1}. After several iterations, Algorithm \ref{algorithm1} converges to NE point, so the beam scheduling state reaches the equilibrium state. As shown in Figure~\ref{fig3}, the algorithm can converge quickly and is better suited to the dynamics of BH LEO satellite systems.
\subsection{system performance}
\begin{figure}[htbp]
    \centering
    \includegraphics[width=0.5\textwidth]{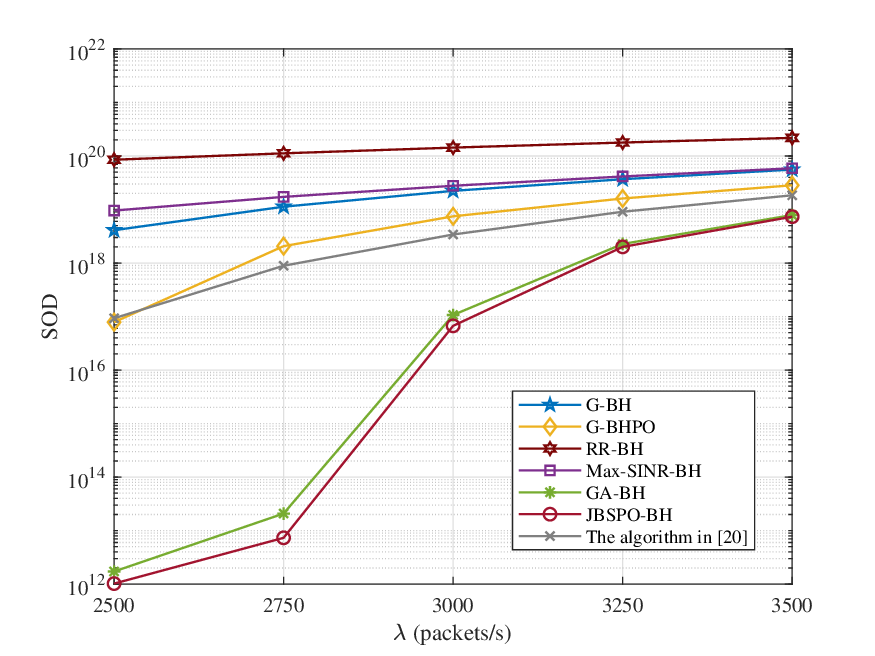}
    \caption{SOD cost versus different value of $\lambda$ in the BH LEO satellite system.}
	\label{fig4}
\end{figure}
In this subsection, we use SOD cost, average throughput, Jain fairness index (JFI) and Satisfaction to evaluate and compare the performance of the above algorithms.

Figure~\ref{fig4} shows the SOD cost varies with different traffic demands.  The SOD cost gradually increases with the increase of $\lambda$, because it's more difficult for LEO satellite to provide sufficient resource when the traffic demands become higher. It can be seen obviously that the SOD cost of the JBSPO-BH algorithm is significantly lower than that of other 6 different comparison algorithms. Therefore, the proposed algorithm can better match the non-uniform traffic demands among different beam positions with limited resources. The SOD cost of G-BHPO algorithm is also significantly reduced compared with G-BH algorithm, proving the effectiveness of the power optimization method.

\begin{figure}[htbp]
    \centering
    \includegraphics[width=0.5\textwidth]{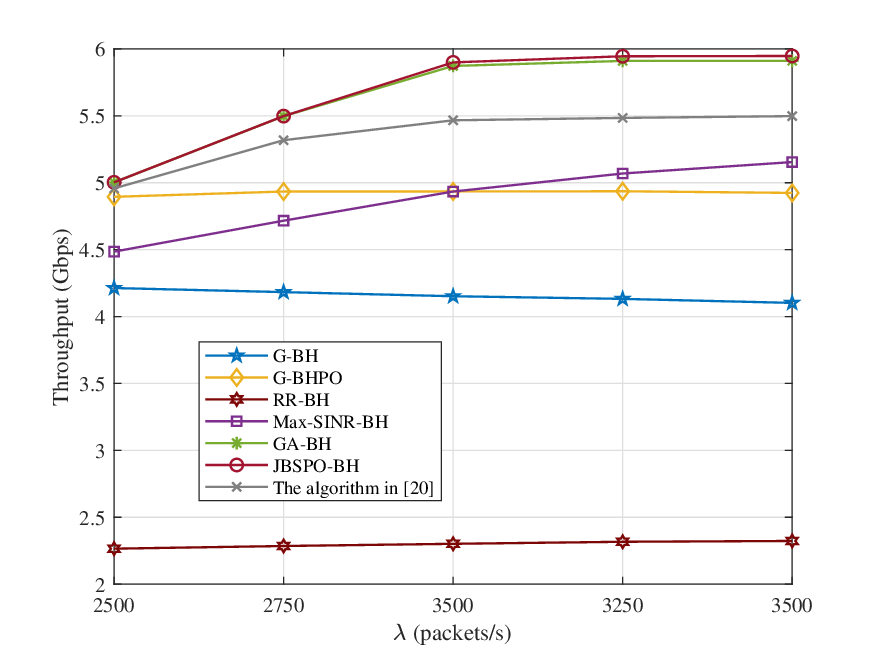}
    \caption{Throughput versus different value of $\lambda$ in the BH LEO satellite system.}
	\label{fig5}
\end{figure}
\begin{figure}[htbp]
    \centering
    \includegraphics[width=0.5\textwidth]{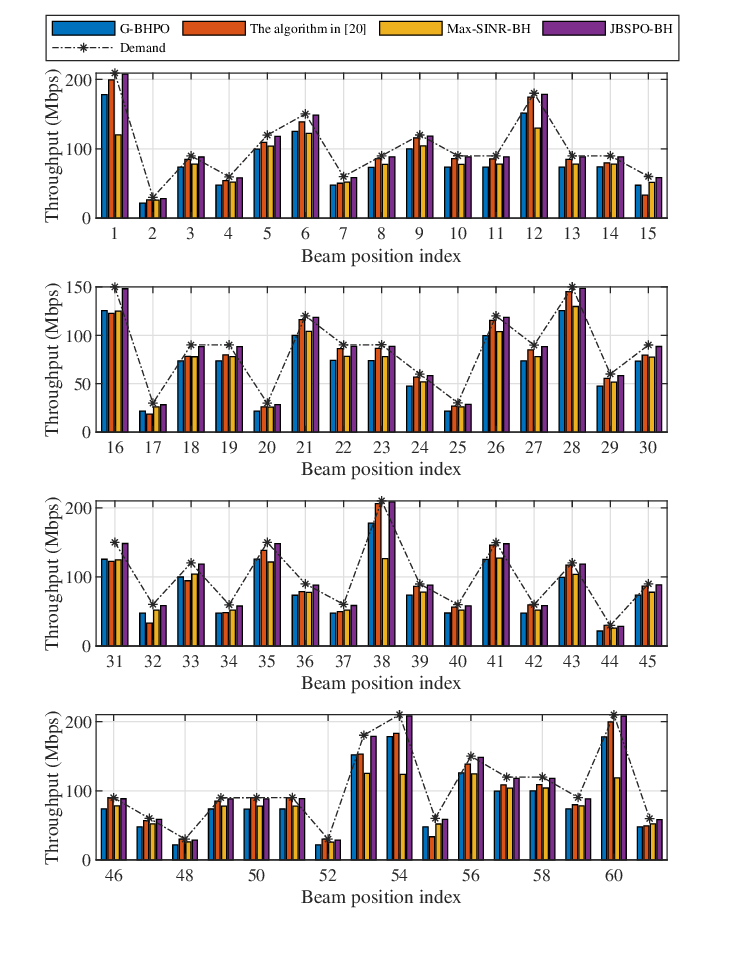}
    \caption{The throughput of each beam position in the BH LEO satellite system.}
	\label{fig6}
\end{figure}
\begin{figure}[htbp]
    \centering
    \includegraphics[width=0.5\textwidth]{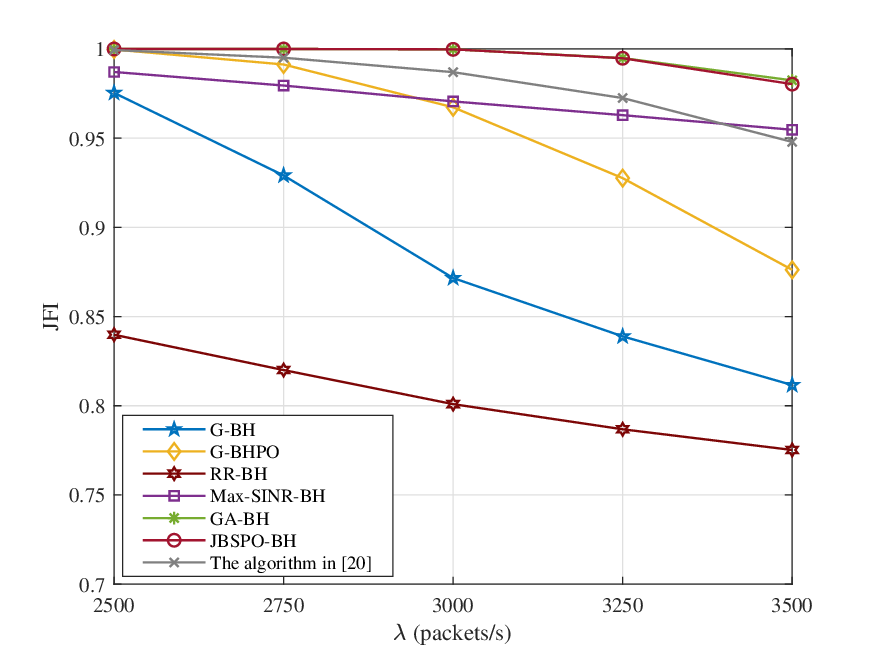}
    \caption{JFI versus different value of $\lambda$ in the BH LEO satellite system.}
	\label{fig7}
\end{figure}
Figure~\ref{fig5} and Figure~\ref{fig6} describe system's throughput using different algorithms. As shown in Figure~\ref{fig5}, as $\lambda$ increases, the JBSPO-BH algorithm can achieve the system throughput of about 5.9 Gbps. The performance of the proposed algorithm is similar to that of GA-BH algorithm. Compared with G-BH algorithm, G-BHPO algorithm, RR-BH algorithm, Max-SINR-BH algorithm and the algorithm in \cite{20}, the proposed algorithm can improve the system throughput by 44.99\%, 20.79\%, 156.06\%, 15.39\% and 8.17\%. This is because other algorithms except GA-BH algorithm and the proposed algorithm do not fully consider the influence of CCI. Since G-BHPO algorithm is obviously superior to G-BH algorithm, RR-BH has the worst performance and GA-BH algorithm has similar performance to the proposed algorithm, Figure~\ref{fig6} only shows the performance comparison of the 4 algorithms with $\lambda =3000$. It can be seen that the JBSPO-BH algorithm can better achieve performance improvement of system throughput.

Figure~\ref{fig7} and Figure~\ref{fig8} describe the fairness performance of different algorithms analyzed by using the satisfaction and JFI. The the satisfaction of the $n$th beam position $S_n$ is defined as:
\begin{equation}
S_n=\frac{R_n}{\hat{R}_n}.
\label{16}
\end{equation}

And the JFI is defined as \cite{32}:
\begin{equation}
JFI=\frac{{{\Big(\sum\limits_{n=1}^{N}{{{S}_{n}}}\Big)}^{2}}}{N\sum\limits_{n=1}^{N}{{\big(S_n\big)^{2}}}}.
\label{17}
\end{equation}

\begin{figure}[htbp]
    \centering
    \includegraphics[width=0.5\textwidth]{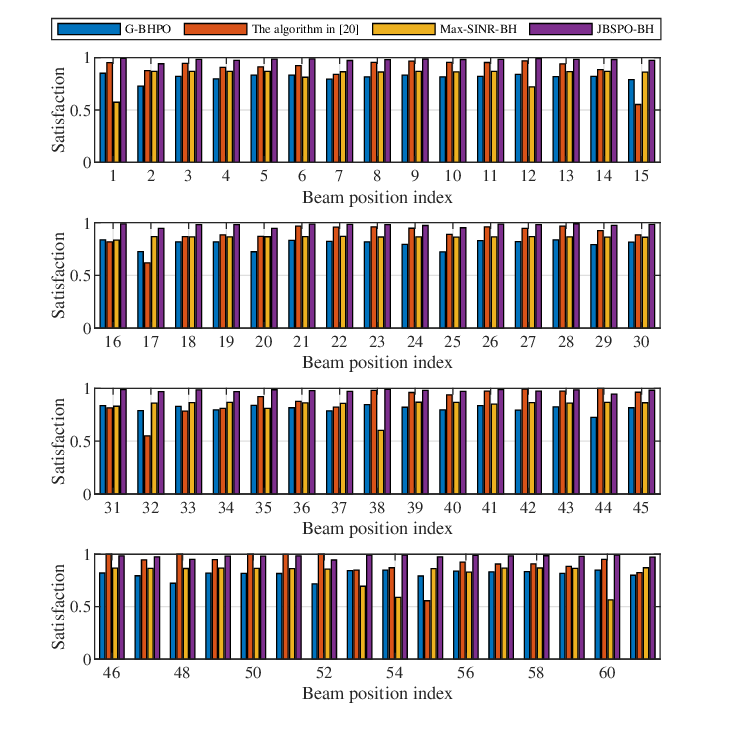}
    \caption{The satisfaction of each beam position in the BH LEO satellite system.}
	\label{fig8}
\end{figure}

In Eq.~\eqref{16}, $R_n$ represents throughput of the $n$th beam position. Figure~\ref{fig7} shows the JFI of the proposed algorithm and other algorithms versus different value of $\lambda$. It is obvious that the proposed algorithm achieve higher fairness than other algorithms. In Figure~\ref{fig8}, the satisfaction of each beam position has been depicted with $\lambda =3000$. In some beam positions, the algorithm in \cite{20} achieves better satisfaction, but for the whole system, the JBSPO-BH algorithm achieves a more balanced improvement in satisfaction, achieving better fairness performance.
\subsection{complexity analysis}
Based on the above results, the performance of JBSPO-BH algorithm is comparable to that of the GA-BH algorithm and significantly superior to that of other algorithms. Additionally, this subsection only focuses on the time complexity analysis of two algorithms' beam scheduling scheme, because both of the power allocation is optimized using penalty function interior point method described in this paper. It is analysed that the time complexity of GA-BH algorithm can be expressed as:
\begin{equation}
\begin{aligned}
  & {{C}_{GA-BH}} \\ 
 & =G*(P*{{P}_{mut}}+P*{{P}_{cro}})*O(fitness) \\ 
 & \approx G*P*O(fitness), \\ 
\end{aligned}
\label{18}
\end{equation}
where $O(fitness)$ is the time complexity of fitness function in GA-BH algorithm. And $O(fitness)=O({{K}^{2}})$, determined by the Eq.~\eqref{8}. It's necessary to note that the performance of GA-BH algorithm is directly influenced by $G$ and $P$. 

The time complexity of the JBSPO-BH algorithm proposed in this paper can be expressed as:
\begin{equation}
{{C}_{JBSPO-BH}}=I*K*N*O(utility),
\label{18}
\end{equation}
where $I$ is the number of iterations, and $O(utility)$ is the time complexity of the utility function determined by Eq.~\eqref{12}, $O(utility)=O({{K}^{2}})$. The value of $I$ is small due to JBSPO-BH algorithm's fast convergence as shown in Figure~\ref{fig3}.

In summary, the complexity ratio of the two algorithms can be obtained.
\begin{equation}
\begin{aligned}
  & \frac{{{C}_{JBSPO-BH}}}{{{C}_{GA-BH}}}\\
  &\approx \frac{I*K*N*O(fitness)}{G*P*O(utility)} \\ 
 &                  \approx \frac{I*K*N}{G*P}. \\ 
\end{aligned}
\label{19}
\end{equation}

The search efficiency of GA is low. In the simulation parameter setting of this work, the values of $G$ and $P$ are both large, and the performance of GA-BH algorithm is closed to that of the JBSPO-BH algorithm. Obviously, the complexity of the proposed algorithm is lower. As the scale of the problem rises, the search space of the GA-BH algorithm increases nonlinearly, while the search space of the proposed algorithm increases linearly. Therefore, the optimization of GA-BH algorithm will be more difficult and the computational complexity will be higher. That is, the JBSPO-BH algorithm in this paper is more likely to be applied to LEO satellites with limited onboard resources.

\section{conclusion}
\label{s5}

The BH controller deployed on the LEO satellite implements the JBSPO-BH algorithm to provide BH services for multiple sink nodes in different beam positions. The beam scheduling variables are binary integers, and the power allocation variables are continuous, so the problem is a mixed integer nonlinear problem. Then the problem is decoupled into two sub-problems: beam scheduling and power optimization. Firstly, it is proven that the beam scheduling problem is an exact potential game, and the NE point exists. Secondly, the optimization method is applied to solve the sub-problem of power allocation. The simulation results show that the JBSPO-BH algorithm can converge quickly. In terms of system performance, it is similar to the GA-BH algorithm but has lower time complexity and shorter execution time. Compared with the other comparison algorithms, the average throughput and fairness of the system are significantly improved.

In this work, we mainly consider the optimization of beam and power resources, but the carrier allocation and beamforming technology can also improve the performance in the BH LEO satellite systems. As there are differentiated services in LEO satellite systems, quality of service (QoS) assurance is considered in our future work. In addition, although the proposed algorithm executes once per BH cycle, it pays more attention to optimizing the system's short-term performance in each slot. Thus, the present research focus turns to long-term performance improvement in each BH cycle. We believe that the above directions are worth exploring.
\section*{ACKNOWLEDGEMENT}
\label{ACKNOWLEDGEMENT}

This work was supported by the National Key Research and Development Program of China 2021YFB2900504, 2020YFB1807900.

\appendix
\subsection{My Appendix A The Definition of potential game}
\label{app:a}
Combined with the objective function in the optimization problem Eq.~\eqref{9}, we define the potential function in this game as Eq.~\eqref{13}, where if $n\in {{({{a}_{k}},{{a}_{-k}})}}$, $\mathbf{I}(n\in {{({{a}_{k}},{{a}_{-k}})}})=1$, else $\mathbf{I}(n\in {{({{a}_{k}},{{a}_{-k}})}})=0$. For clarity, $\mathbf{I}(n\in ({{a}_{k}},{{a}_{-k}}))$ is written as $\mathbf{I}(n)$ , ${{P}_{ave}}H_{n,m}^{t}$ is written as $P_{n,m}^{t}$ and $BT_b$ is written as $B_b$. Besides, $\mathbf{L}(x)$ represents ${{\log }_{2}}(1+x)$. 
\begin{equation}
\begin{aligned}
  & F({{a}_{k}},{{a}_{-k}}) \\ 
 & ={{\sum\limits_{n=1}^{N}{\bigg({{B}_{b}}\mathbf{L}\Big(\frac{\mathbf{I}(n)P_{nn}^{t}}{\sum\limits_{m\ne n}{\mathbf{I}(m)P_{nm}^{t}+{{P}_{N}}}}\Big)-\hat{R}_{n}^{t}\bigg)}}^{2}}. \\
\end{aligned}
\label{13}
\end{equation}
\begin{definition}\label{definition1}In a game, if $\forall k\in \mathbf{K}$, there exists a function $F(a)$ satisfying the following relation, then the game is an exact potential game:
\begin{equation}
{{u}_{k}}(a_{k}^{'},{{a}_{-k}})-{{u}_{k}}({{a}_{k}},{{a}_{-k}})=F(a_{k}^{'},{{a}_{-k}})-F({{a}_{k}},{{a}_{-k}}).
\label{11}
\end{equation}
\end{definition}
In Eq.~\eqref{11}, $a_{k}^{'}$ represents the $k$th player's strategy that is different from ${{a}_{k}}$.

\subsection{My Appendix B The proof of theorem 1}

If the condition in Eq.~\eqref{11} is satisfied, $\mathbf{G}$ is an exact potential game.

Thus, we can get:
\begin{equation*}
\begin{aligned}
  & F({{a}_{k}},{{a}_{-k}}) \\ 
 & ={{\sum\limits_{n=1}^{N}{\bigg({{B}_{b}}\mathbf{L}\Big(\frac{\mathbf{I}(n)P_{nn}^{t}}{\sum\limits_{m\ne n}{\mathbf{I}(m)P_{nm}^{t}+{{P}_{N}}}}\Big)-\hat{R}_{n}^{t}\bigg)}}^{2}} \\ 
 & =\sum\limits_{n=1}^{N}{\Bigg(\bigg({{{{B}_{b}}\mathbf{L}\Big(\frac{\mathbf{I}(n)P_{nn}^{t}}{\sum\limits_{m\ne n}{\mathbf{I}(m)P_{nm}^{t}+{{P}_{N}}}}\Big)\bigg)}^{2}}} \\ 
 & -2{{B}_{b}\hat{R}_{n}^{t}}\mathbf{L}\Big(\frac{\mathbf{I}(n)P_{nn}^{t}}{\sum\limits_{m\ne n}{\mathbf{I}(m)P_{nm}^{t}+{{P}_{N}}}}\Big)+\hat{R}{{_{n}^{t}}^{2}}\Bigg) \\ 
 & =\sum\limits_{n=1}^{N}{\hat{R}{{_{n}^{t}}^{2}}}+\sum\limits_{k=1}^{K}{{{\bigg({{B}_{b}}\mathbf{L}\Big(\frac{P_{{{a}_{k}},{{a}_{k}}}^{t}}{\sum\limits_{l\ne k}{P_{{{a}_{k}},{{a}_{l}}}^{t}+{{P}_{N}}}}\Big)\bigg)}^{2}}}\\
 &-\sum\limits_{k=1}^{K}{2{{B}_{b}}\hat{R}_{{{a}_{k}}}^{t}\mathbf{L}\Big(\frac{P_{{{a}_{k}},{{a}_{k}}}^{t}}{\sum\limits_{l\ne k}{P_{{{a}_{k}},{{a}_{l}}}^{t}+{{P}_{N}}}}\Big)}\\
 & =C+u_k({{a}_{k}},{{a}_{-k}}). \\ 
\end{aligned}
\end{equation*}

From the above derivation, we can get
\begin{equation*}
\begin{aligned}
  & F({{a}_{k}},{{a}_{-k}})-F(a_{k}^{'},{{a}_{-k}}) \\ 
 & =C+u_k({{a}_{k}},{{a}_{-k}})-C-u_k(a_{k}^{'},{{a}_{-k}}) \\ 
 & =u_k({{a}_{k}},{{a}_{-k}})-u_k(a_{k}^{'},{{a}_{-k}}). \\ 
\end{aligned}
\end{equation*}

In summary, Theorem \ref{theo1} is proved.
\bibliography{myref}

\biographies

\begin{CCJNLbiography}{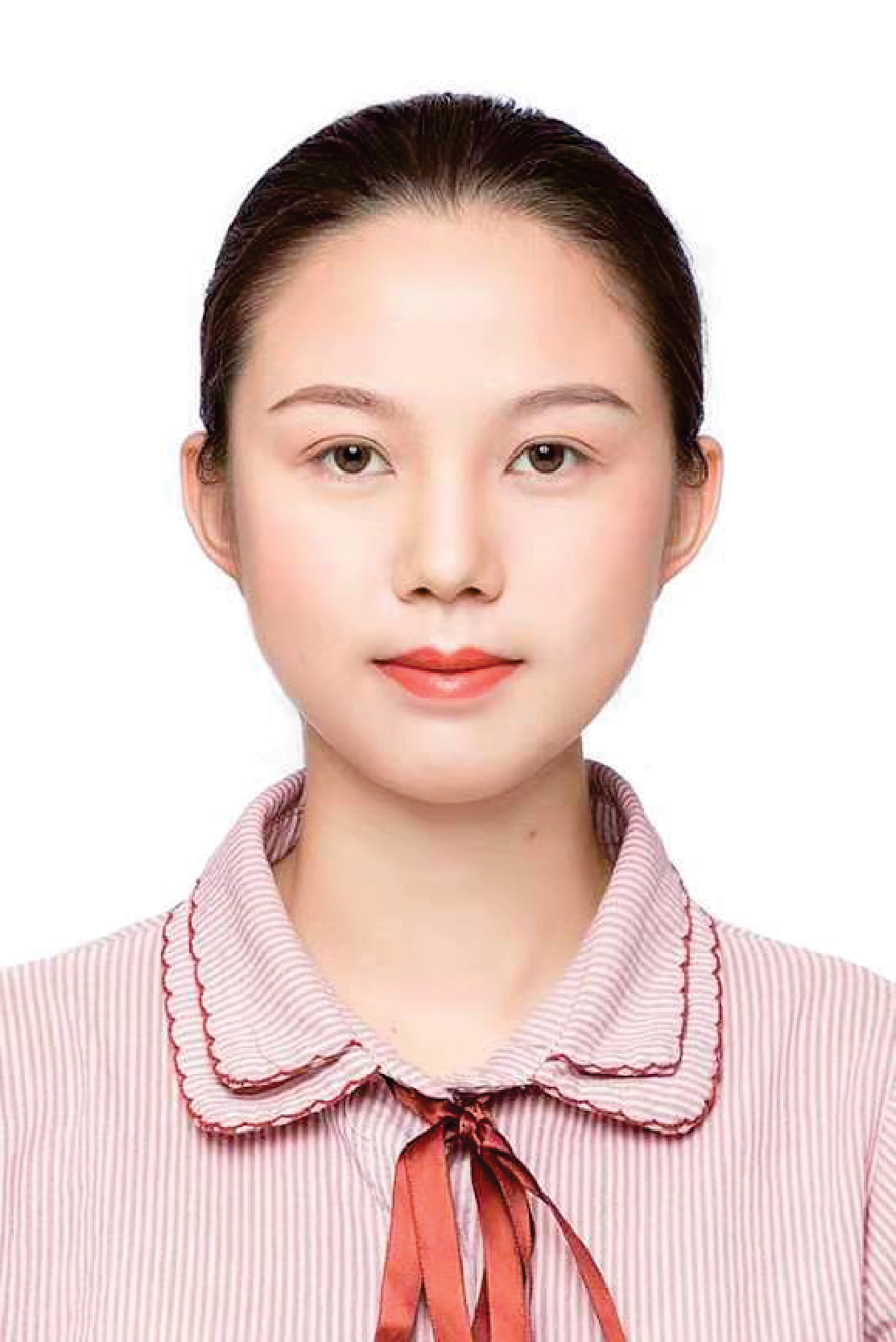}{Shuang Zheng}
received the B.E. degree in communication engineering from Beijing University of Posts and Telecommunications, Beijing, China, in 2021. She is currently pursuing the Ph.D. degree with the Key Laboratory of Universal Wireless Communications, School of Information and Communication Engineering, Beijing University of Posts and Telecommunications, Beijing, China. Her research interests include 5G/6G network technology and satellite terrestrial networks.
\end{CCJNLbiography}
\begin{CCJNLbiography}{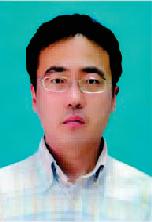}{Xing Zhang}
is Full Professor with the School of Information and Communications Engineering, Beijing University of Posts and
Telecommunications, China. His research interests are mainly in 5G/6G mobile communication system, mobile edge computing and data analysis, space-integrated-ground information network, cognitive radio and collaborative communication. He is a Senior Member of IEEE and IEEE ComSoc, Member of CCF.
\end{CCJNLbiography}
\begin{CCJNLbiography}{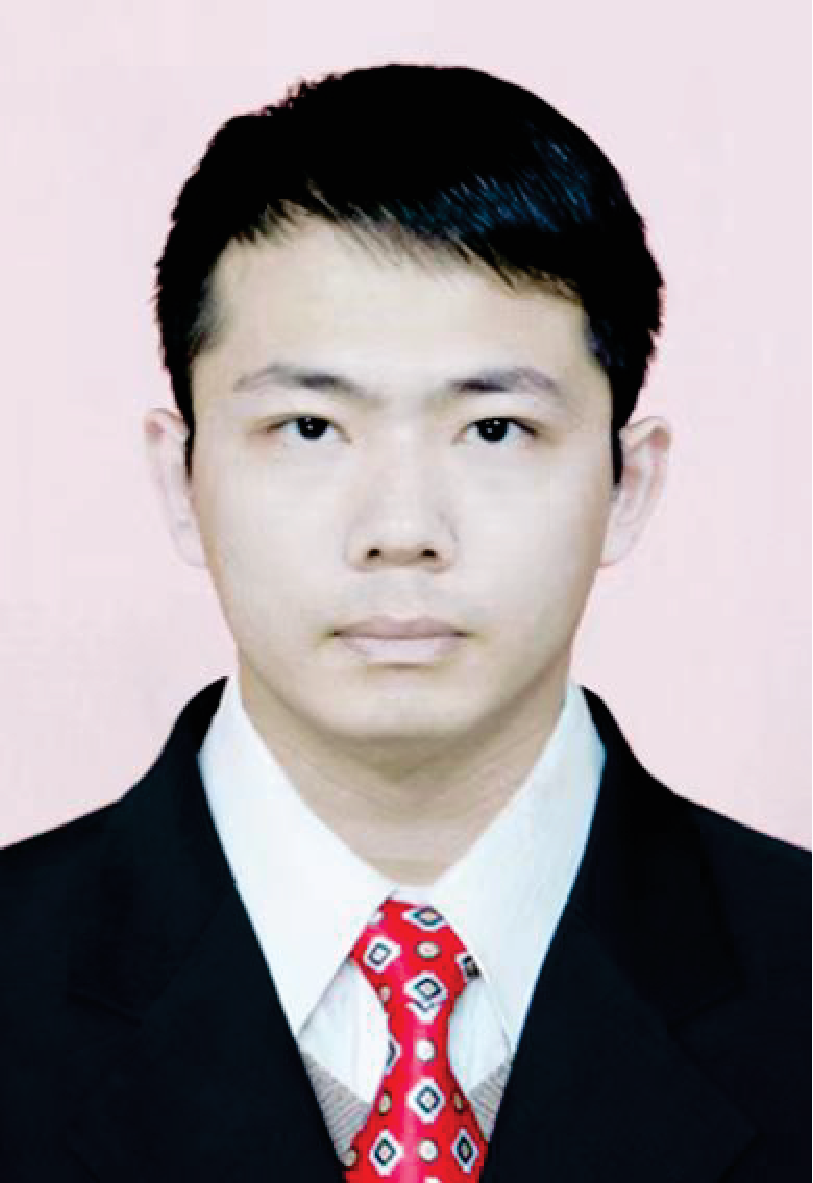}{Peng Wang}
received the B.E. degree in electronic and information engineering from Chang’an University, Xi’an, China, in 2012, and the M.E. degree in communication and information systems from Xi’an University of science and technology, in 2016. He is currently pursuing the Ph.D. degree with the Key Laboratory of Universal Wireless Communications, School of Information and Communication Engineering, Beijing University of Posts and Telecommunications, Beijing, China. His research interests include 5G network technology, satellite-terrestrial networks, and mobile edge computing.
\end{CCJNLbiography}
\begin{CCJNLbiography}{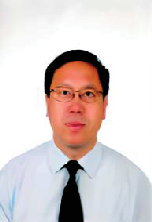}{Wenbo Wang}
received the B.S. degree in communication engineering and the M.S. and Ph.D. degrees in signal and information processing from the Beijing University of Posts and Telecommunications (BUPT), Beijing, China, in 1986, 1989, and 1992, respectively. From 1992 to 1993, he was a Researcher with ICON Communication Inc., Dallas, TX, USA. His research interests include transmission technology, broadband wireless access, wireless network theory, digital signal processing, multiple-input-multiple-output (MIMO), cooperative and cognitive communications, and software radio technology.
\end{CCJNLbiography}

\end{document}